\documentclass[conference]{IEEEtran} 

\usepackage{bbm}
\usepackage{graphicx}
\usepackage{subfigure}
\usepackage{color}
\usepackage{epsfig}
\usepackage{epstopdf}
\usepackage{amssymb}
\usepackage[fleqn]{amsmath}
\usepackage{amsthm}
\usepackage{latexsym}
\usepackage{algorithm}
\usepackage{algpseudocode}
\usepackage{pifont}
\usepackage{tabu}
\usepackage{caption}
\newtheorem{thm}{Theorem}

\hyphenation{op-tical net-works semi-conduc-tor}

\begin{document}
%
% paper title
% Titles are generally capitalized except for words such as a, an, and, as,
% at, but, by, for, in, nor, of, on, or, the, to and up, which are usually
% not capitalized unless they are the first or last word of the title.
% Linebreaks \\ can be used within to get better formatting as desired.
% Do not put math or special symbols in the title.
\title{Polar Codes over Fading Channels with Power and Delay Constraints}

\author{

\IEEEauthorblockN{Deekshith P. K.}
\IEEEauthorblockA{Dept. of ECE,\\Indian Institute of Science,\\ Bangalore, India.\\Email: deeks@ece.iisc.ernet.in}

\and

\IEEEauthorblockN{K. R. Sahasranand}
\IEEEauthorblockA{Dept. of ECE,\\Indian Institute of Science,\\ Bangalore, India.\\Email: sanandkr@ece.iisc.ernet.in}

}

\maketitle

\begin{abstract}
The inherent nature of polar codes being channel specific makes it difficult to use them in a setting where the communication channel changes with time. In particular, to be able to use polar codes in a wireless scenario, varying attenuation due to fading needs to be mitigated. To the best of our knowledge, there has been no comprehensive work in this direction thus far. In this work, a practical scheme involving channel inversion with the knowledge of the channel state at the transmitter, is proposed. An additional practical constraint on the permissible average and peak power is imposed, which in turn makes the channel equivalent to an additive white Gaussian noise (AWGN) channel cascaded with an erasure channel. It is shown that the constructed polar code could be made to achieve the symmetric capacity of this channel. Further, a means to compute the optimal design rate of the polar code for a given power constraint is also discussed.
\end{abstract}

\IEEEpeerreviewmaketitle

\section{Introduction}
The polar coding technique introduced by Erdal Ar{\i}kan~\cite{arikan2009channel}, was the result of his efforts to devise a practical scheme to boost the \textit{channel cutoff rate} in sequential decoding of convolutional codes. Analytical error performance guarantees, low encoding and decoding complexity and, particularly, the provable symmetric capacity (originally, of binary input memoryless channels) achieving characteristic, constitute the reason for the vogue that polar codes are. Forward error correction mechanisms have to be scalable in terms of bandwidth, data rates, energy efficiency and complexity so as to pave the way into next generation standards. Polar code is one contender in this regard~\cite{Huawei}. 

The innate nature of wireless medium offers challenges in using polar codes for communication over fading channels. This is because the channel changes with time and polar code is intrinsically channel specific. In this work, we provide a technique for using polar codes over wireless channels with delay and power constraints. So as to outline our contribution in its proper context, we first provide an overview of related literature.

In his original work~\cite{arikan2009channel}, Ar{\i}kan  proposed the idea of \textit{channel polarization} as a method to design codes that achieve the symmetric capacity of a binary input discrete memoryless channel (B-DMC). \c{S}a\c{s}\u{o}glu et al.~\cite{telatar2009polarization} extended the result to arbitrary discrete memoryless channels. An early investigation on polar coding technique achieving the capacity of an additive white Gaussian noise (AWGN) channel was undertaken in~\cite{abbe2012polar}. Taking into consideration the channel specific nature of polar codes, in a recent study~\cite{vangala2015comparative}, the authors propose a heuristic algorithm to obtain an optimized \emph{design signal to noise ratio} (design-SNR) for a given polar code construction. This is done so as to combat the performance degradation caused due to operating the codes at  SNRs different from their design SNRs.

In case of fading channels,~\cite{si2014polar} proposes a polar coding scheme for two block fading channel models namely, a binary symmetric channel with crossover probability being a finite mixture of Bernoulli distributions and an additive channel constrained by a non negative, finite mean input and noise probability density, a finite mixture of exponential distributions.  The scheme is shown to achieve the corresponding symmetric capacities of the channels, without the knowledge of instantaneous channel state at the transmitter. In~\cite{boutros2013polarization}, the authors study the use of polar codes in conjunction with \emph{channel multiplexers} for binary input block fading channels with two fading states. For a Rayleigh fading channel model, when channel state information (CSI) is available at both the transmitter (referred to as CSIT) and the receiver (CSIR),~\cite{bravo2013polar} discusses a polar coding technique by quantizing the fading states and using multiple polar codebooks. Monte Carlo approach suggested in~\cite{arikan2009channel} is used to obtain the \textit{information set} in their construction. In contrast,~\cite{trifonov2015design} proposes to model the sub channels obtained by the polar transformation of a Rayleigh fading channel by yet another fading channel with additive Gaussian noise and multiplicative noise with $\chi$ distribution. Error probabilities of sub channels computed for this model are used in code construction. 
\par In a recent arxiv submission~\cite{wasserman2016ber}, the authors compare the bit error rate performance of polar codes with that of convolutional codes for Orthogonal Frequency Division Multiplexing (OFDM) systems. Another recently reported work~\cite{tavildar2016interleaver} studies the use of polar codes along with bit interleavers (so as to improve the diversity gain) over  a block fading channel with two blocks. A study on the use of polar lattices for non binary input fading channel with independent and identically distributed (i.i.d.) fading states is reported in~\cite{liu2016polar}. A coded modulation approach for designing polar codes for block fading channels is taken up in~\cite{zheng2017polar}.  A summary of existing schemes which implement polar codes over fading channels is provided in Table~\ref{table}. 
\par All of available research could be binned into two broad categories. The first includes studies that propose polar coding schemes to use fading to their advantage, i.e., to improve diversity gain \cite{boutros2013polarization}, \cite{wasserman2016ber}, \cite{tavildar2016interleaver} at the cost of delay in communication. Studies in the other category treat fading as an impediment and  propose ways to exclusively construct polar codes for fading channels  \cite{si2014polar}, \cite{trifonov2015design}, \cite{bravo2013polar}, \cite{liu2016polar} and \cite{zheng2017polar} often times leading to complex designs. In this work, we have undertaken an alternative approach.
\par We propose a simple, practical scheme based on \emph{channel inversion}, to use polar codes in a fading environment while obeying delay constraints imposed by the data and power constraints imposed by the circuitry. The method has the advantage that a single polar codebook suffices for communication, irrespective of the channel conditions. In addition, the codebook can be constructed based on the construction for an AWGN channel without fading. Surprisingly, despite its simplicity, ease of implementation and amenability to analysis, channel inversion as a technique to use polar codes over fading channels has not been studied thus far.

\begin{table}
\tabulinesep=1.8mm
\begin{tabu}{ | c | c | c | c | c | }
\hline
 Reference & Fading & CSIT, CSIR & Metric \\ 
 \hline
 \hline
 H. Si et al.~\cite{si2014polar} & Block  & \textcolor{red}{X}, \textcolor{green}{\checkmark}  & $C_{\text{ergodic}}$\\ 
 \hline
 J. Boutros et al.~\cite{boutros2013polarization}  & Quasi static & \textcolor{green}{\checkmark}, \textcolor{green}{\checkmark} & $C_{\text{outage}}$ \\
 \hline
 A. B-Santos~\cite{bravo2013polar} & Block & \textcolor{green}{\checkmark}, \textcolor{green}{\checkmark}  &  -- \\ 
 \hline 
 P. Trifonov \cite{trifonov2015design} & Fast & \textcolor{red}{X}, \textcolor{green}{\checkmark}  & -- \\ 
 \hline
 S. R. Tavildar~\cite{tavildar2016interleaver} & Quasi static & \textcolor{red}{X}, \textcolor{green}{\checkmark}  &  $C_{\text{outage}}$  \\ 
 \hline
 L. Liu et al.~\cite{liu2016polar} & Fast & \textcolor{red}{X}, \textcolor{green}{\checkmark}  &  $C_{\text{ergodic}}$ \\ 
 \hline
 M. Zheng et al.~\cite{zheng2017polar} & Block & \textcolor{red}{X}, \textcolor{red}{X} & $C_{\text{ergodic}}$ \\ 
 \hline
\end{tabu}
\caption{A summary of existing schemes which implement polar codes over fading channels is given. Here, $C_{\text{ergodic}}$ and $C_{\text{outage}}$ refer to the ergodic and outage capacities respectively. Checkmarks and crossmarks indicate the presence and absence of CSI, respectively.}
\label{table}
\vspace{-0.3cm}
\end{table}

The paper is organized as follows. Notation and preliminaries are provided in Section~\ref{prelims}. In Section~\ref{AWGN_prelims}, we briefly describe a polar code construction method for an AWGN channel. The challenge in  extending the same to a fading channel is discussed in Section~\ref{eg_prob_dfn}. The main result of the paper is provided in Section~\ref{algo_main_result} and further expounded in Section~\ref{eq_view} by considering an equivalent model.  In Section \ref{optm}, we discuss the rate optimal design for a given power constraint and conclude with some future directions in Section~\ref{conclusion}. 

\section{Preliminaries}
\label{prelims}
A polar code is fully specified by the triple $(N,K,\mathcal{F})$ where $N = 2^n$ is the length of the codeword in bits, $K$ is the number of information bits (thereby making the rate of the code, $\frac{K}{N}$) and $\mathcal{F}$ is the set of indices in $\{1,...,N\}$ which carry $\emph{information}$ bits. In our context, it is worthwhile to view $N$ as the number of channel uses.

\emph{Encoding - } Consider two bits $u_1$ and $u_2$ which are to be encoded into bits $x_1$ and $x_2$ to be sent over a channel $W$. This is performed as shown in Figure~\ref{fig:enc}. Equivalently,
\begin{equation} \label{eqn:enc}
\hspace{.5cm}
  %\[
\left[
  \begin{array}{cc}
    x_1 & x_2   
  \end{array}
\right]
   =
\left[
  \begin{array}{cc}
    u_1 &  u_2   \\
   \end{array}
\right]
\left[
  \begin{array}{cccc}
    1 & 0\\
    1 & 1\\
  \end{array}
\right],
%\]
\end{equation}
\begin{figure}[!h]
\begin{center}
\centering
\includegraphics[scale=0.55]{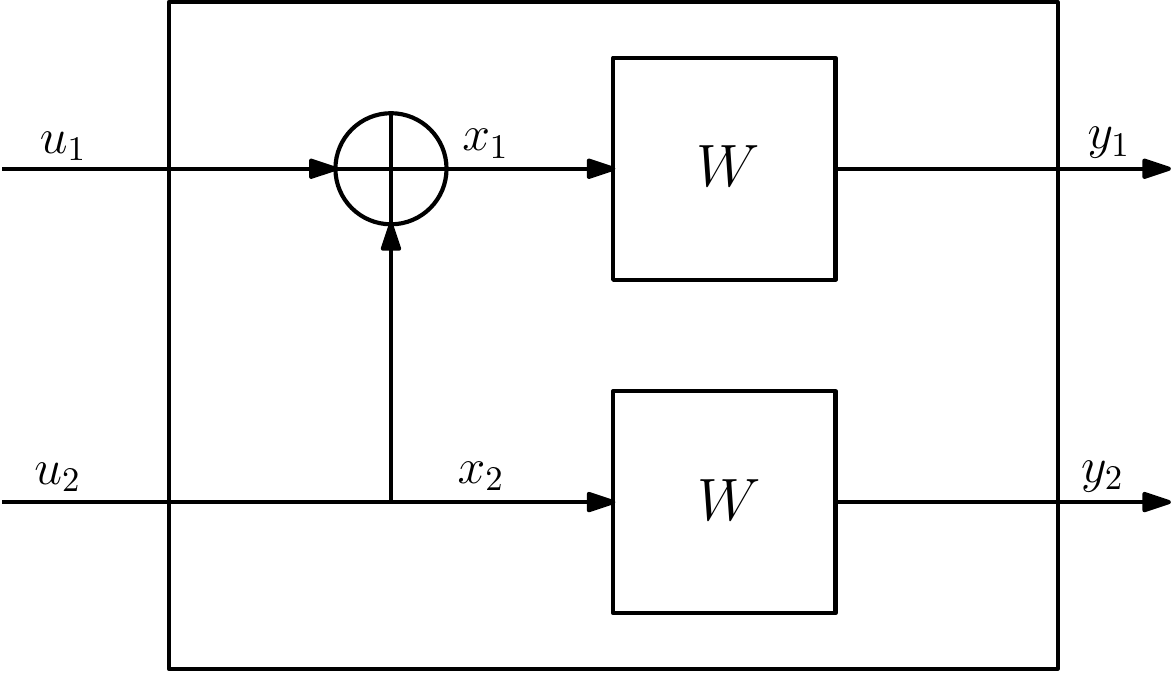}
\caption{Encoding the bits $u_1$ and $u_2$} \label{fig:enc}
\end{center}
\end{figure}
or more concisely as $x = u.F$. 

In an $(N,K,\mathcal{F})$ polar code, this procedure is repeated in a recursive fashion to generate a codeword of length $N$. The codeword $x \in \{0,1\}^N$ is obtained by multiplying the input vector $u \in \{0,1\}^N$ with the matrix $F^{\otimes n}$ where $F^{\otimes n}$ is the $n$-fold Kronecker product of the basic encoding matrix $F$ shown in Equation (\hspace{-.12cm}~\ref{eqn:enc}). The input vector $u$ is constituted of $K$ information bits and the rest of the bits \emph{frozen a priori} (set to zero). The locations of information bits is dictated by the set $\mathcal{F}$ which is calculated via the Bhattacharyya parameter of the channel. Thus, the codeword transmitted is
\begin{equation}
\hspace{2.5cm}
x = u\cdot F^{\otimes n}.
\label{eqn:codeword}
\end{equation}

\emph{Decoding - } Decoding may be performed in a number of ways such as successive cancellation (SC) or list decoding~\cite{tal2015list}, to name a few. The complexity of an SC decoder is of the order of $O(N \log N)$ where $N$ is the codeword length.

\emph{Error performance - } It was shown by Ar{\i}kan~\cite{arikan2009channel} that for a B-DMC $W$, for a polar code of length $N$ and rate $R < I(W)$ where $I(W)$ is the symmetric capacity of the channel, the bit error rate is upper bounded by $2^{-N^{\beta}}$ for some fixed $\beta < 1/2$.

\section{Designing polar codes for AWGN channel}
\label{AWGN_prelims}
In this section, we provide a brief overview of one particular construction of polar code for an AWGN channel, which we will use subsequently in the construction of polar codes for fading channels. There are various techniques for constructing polar codes for an AWGN channel. Some prominent techniques include Monte Carlo approach and polarized channel ordering scheme based on upper bounds for Bhattacharyya parameters of synthesized channels, both due to Ar{\i}kan ($\!$~\cite{arikan2009channel},~\cite{arikan2008performance}), an algorithm which estimates the transition probability matrices and hence the bit error rates (BERs) of the synthesized channels, due to Tal and Vardy~\cite{tal2013construct}, a Gaussian approximation scheme due to Trifonov~\cite{trifonov2012efficient} and a quantization scheme proposed in~\cite{abbe2012polar}. A comparative study of all the above schemes except~\cite{abbe2012polar} is undertaken in~\cite{vangala2015comparative}. It is observed therein that all the schemes perform equally well subject to choosing the design-SNR appropriately. In this work, we use the scheme proposed originally by Ar{\i}kan~\cite{arikan2008performance} incorporating the modification mentioned in~\cite{vangala2015comparative}. In what follows, we provide some critical details of this construction. 

To begin with, let us examine the channel for which the polar code is to be constructed. The codeword $x=(x_1,\hdots,x_N) $ is to be modulated, symbol by symbol, using a Binary Phase Shift Keying (BPSK) scheme with power $P$ to obtain $X =\big(X_1,\hdots,X_N\big) \in \{{\pm \sqrt{P}}\}^N$ which is transmitted over an AWGN channel. This is represented as $$Y_k = X_k + \eta_k,~k=1,2,\ldots,N,$$  where $\{\eta_k:1 \le k \le N\}$ is i.i.d. with common distribution, Gaussian with mean $0$ and variance $\sigma^2$, denoted $\mathcal{N}(0,\sigma^2)$.

Fix a rate $\frac{K}{N}= R \in [0,1]$. Obtain the BPSK signalling power $P$ required to achieve this rate over a binary input AWGN channel with noise variance $\sigma^2$ by solving the following equation: 
\begin{equation}
R= -\int_{-\infty}^{+\infty}f_Y(y)\log\big(f_Y(y)\big)dy -\frac{1}{2}\log(2\pi e\sigma^2),
\label{eqn:bin}
\end{equation}
where $$f_Y(y)=\frac{1}{2}\frac{1}{\sqrt{2\pi\sigma^2}}e^{-\frac{(y-\sqrt{P})^2}{2\sigma^2}}+\frac{1}{2}\frac{1}{\sqrt{2\pi\sigma^2}}e^{-\frac{(y+\sqrt{P})^2}{2\sigma^2}}.$$ The design-SNR for the construction of polar code is thus fixed to be $\frac{P}{\sigma^2}$ (denoted from now on as, simply SNR). This is the sole parameter required to obtain the information set $\mathcal{F}$. Next, we set out to sketch the details of the same. 

\par For a given channel with binary input alphabet $\{\pm \sqrt{P}\}$ and transition probability density function $W(\cdot|\cdot)$, the Bhattacharyya parameter of the channel is defined to be $$Z(W) \triangleq \int_{-\infty}^{+\infty} \sqrt{ W\big(y\lvert-\sqrt{P}\big)W\big(y\lvert+\sqrt{P}\big)} dy.$$ Accordingly, for an AWGN channel with BPSK input, we obtain the Bhattacharyya parameter $Z(SNR)=e^{-SNR}$. 

\par For the given AWGN channel, let $Z_{i,j}(SNR)$ denote the Bhattacharyya parameter of the $j^{\text{th}}$ polarized channel, synthesized after $i$ polar transformations of the original channel, where $j \in \{1,\hdots,2^i\}$ and $i \in \{0,\hdots,n\}$. Then, define the following recursive relationship as in \cite{arikan2008performance}:  
  \begin{equation}
  \hspace{.65cm}
  \begin{array}{lr}
Z_{i+1,2j-1}(SNR) = Z_{i,j}^2(SNR),\\
Z_{i+1,2j}(SNR) =  2Z_{i,j}-Z_{i,j}^2(SNR).
\end{array}
\label{eqn:bhat}
\end{equation}
The above recursion is initialized to $Z_{0,1}(SNR)=e^{-SNR}$. Using the recursion, we obtain the vector $v_Z=\big(Z_{n,1}(SNR),\hdots,~Z_{n,N}(SNR)\big)$. Let $\pi_N$ be a permutation function and $\pi_N(v_Z)=\big(Z_{n,\pi_1}(SNR),\hdots,~Z_{n,\pi_N}(SNR)\big)$ such that $Z_{n,\pi_l} \leq Z_{n,\pi_m}$ for $1 \leq l <m \leq N$. Obtain the information vector $\mathcal{F}$ by choosing the first $K$ elements of $\pi_N(v_Z)$. This completes the discussion of polar code construction for AWGN channels without fading.
\section{Channel Inversion }
\label{eg_prob_dfn}
\par The optimality of polar codes (in the sense of achieving the symmetric capacity~\cite{arikan2009channel}) stems from the polarization of \emph{effective} channels of the bits $u_k$s, into good and bad channels. However, it may be noted that the \emph{actual} channel that the codeword bits $x_k$s are transmitted through, is (statistically) the same $W$ for all $x_k,~ k=1,\hdots,N$. The presence of multipath fading renders different, the channel each of the transmitted symbols $X_k$, sees. In particular, the channel output is$$Y_k = H_kX_k + \eta_k,~k=1,2,\ldots,N,$$ where $\{H_k:1 \le k \le N\}$ are i.i.d. channel gains with common distribution $F_H$. 
\par When the instantaneous channel gains $H_k$s are available at the transmitter (CSIT) before the transmission of $X_k$, it is possible to instead transmit $\tilde{X}_k=\frac{X_k}{H_k}$ (from a BPSK constellation with power $\frac{P}{H_k^2}$) to \emph{nullify} the effect of the channel gain. This is known as \emph{channel inversion}. This ensures that the effective channel remains statistically the same for all transmissions. However, to perform the inversion, the expected amount of energy expended is $P\cdot\mathbb{E}[\frac{1}{H^2}]$ which can be prohibitively large. For  example, when $F_H \equiv \mathcal{N}(0,\sigma_H^2),~P\cdot\mathbb{E}[\frac{1}{H^2}]=\infty$. But  practically, there is only a finite amount of power available for transmission. This is modelled by imposing an average power constraint $\mathbb{E}[\zeta^2]\leq Q<\infty$ at the transmitter, where  $\zeta$ is the transmitted symbol.
\section{Design with power constraint: Fading Channel}
\label{algo_main_result}
 In a practical scenario, although the polar code is designed at a \emph{design-SNR}, $\frac{P}{\sigma^2}$, it may be operated at a different \emph{operating-SNR}, say $\frac{Q}{\sigma^2}$, where $Q$ is the average power constraint (we use the terms operating power and average power constraint synonymously hereafter). See~\cite{vangala2015comparative} for a detailed study in this direction. For an AWGN channel, this mismatch in design and operating SNRs translates to a change in error performance for the same rate at which the polar code was originally designed. However, in a fading scenario, the power constraint $Q$ dictates the (reduced) effective rate at which symbols are transmitted. This is because, it is impossible to invert all the channels the codeword sees, for that would require an infinite amount of power as observed in the previous section. This results in the loss of codeword symbols at those instances when the channel is too bad to be inverted.

\par When the channel gain is $H_k$, the symbol transmitted is chosen from a BPSK constellation whose power is $P\cdot\frac{1}{H_k^2}$. Note that when $|H_k|$ is close to zero, a huge amount of power is required to invert the channel. This suggests the following strategy. When $|H_k|$ is less than a certain $\overline{\delta}$, transmit at \emph{zero} power (i.e., transmit nothing). During all other instances, transmit at power $P\cdot\frac{1}{H_k^2}$. This results in some codeword symbols getting lost, whenever the channel is bad \big($|H_k| < \overline{\delta}$\big). Thus, we would like $\overline{\delta}$ to be as small as possible. This, however, is constrained by the average amount of power available, $Q < \infty$. We calculate $\overline{\delta}$ as follows:
\begin{equation}
\int_{-\infty}^{-\overline{\delta}} P\cdot\frac{1}{h^2}\cdot dF_H + \int_{\overline{\delta}}^{\infty} P\cdot\frac{1}{h^2}\cdot dF_H  = Q.
\label{eqn:avgP}
\end{equation}
In addition to the average power constraint, owing to practical considerations such as the need to control the Peak to Average Power Ratio (PAPR), as high PAPR signals usually strain the analog circuitry, a peak power constraint is also imposed. Suppose that the transmit power is constrained to be always less than $\tilde{Q}$. That is, we perform channel inversion only when $\frac{P}{H_k^2} \le \tilde{Q}$. Equivalently, when
\begin{align}
\hspace{2.5cm}
 |H_k| \ge \sqrt{\frac{P}{\tilde{Q}}}.
\label{eqn:peakP}
\end{align}

Thus, under a setting with average power constraint $Q$ and peak power constraint $\tilde{Q}$, the channel is not inverted during time instant $k$, when $|H_k|<\delta$ where
\begin{equation}
\hspace{2.1cm}
\delta = \max \Bigg\{\overline{\delta}, \sqrt{\frac{P}{\tilde{Q}}}\Bigg\}.
\label{eqn:delta}
\end{equation}
This is called \textit{truncated channel inversion} and is known to be optimal for delay constrained communication~\cite{yang2015optimum}. The algorithm for transmitting data using polar codes over a fading channel with CSIT is given in two parts (see Figure~\ref{block_fad_inv} for a schematic). The first part, shown below in Algorithm ($\!\!$~\ref{algo:polaroff}), can be executed offline, before the actual transmission of symbols. The second part, Algorithm ($\!\!$~\ref{algo:polaron}) is online and is executed for each channel use.
\label{eqcha}
\begin{figure}[h]
\begin{center}
\centering
\includegraphics[scale=0.55]{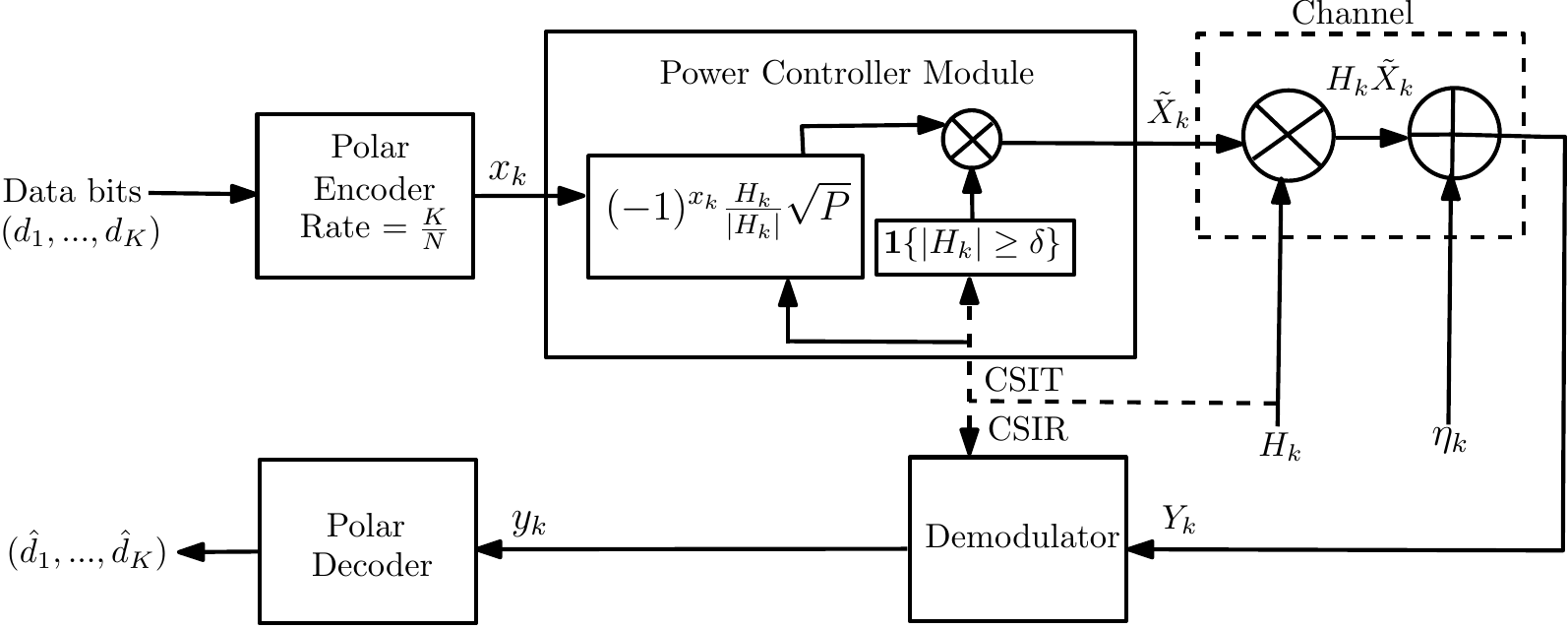}
\caption{Block diagram of the polar coded communication system.} \label{block_fad_inv}
\end{center}
\end{figure}

\begin{algorithm}[h]
\caption{Polar coded data transmission for Binary Input fading AWGN channels: Offline part}
\label{algo:polaroff}
\textbf{INPUT:} Information bits $\{d_j: 1 \le j \le K\}, N, K, Q, \tilde{Q}$, AWGN statistics, $\mathcal{N}(0,\sigma^2)$.\\
\textbf{OUTPUT:} Design power $P$, Codeword $x$, Threshold $\delta$.
\begin{algorithmic}[1]
\State Initialize $u_i=0, 1 \le i \le N$.
\State Calculate the power $P$ for which the rate over the binary input AWGN channel is $R=\frac{K}{N}$ using Equation ($\!\!$~\ref{eqn:bin}).
\State Construct the set $\mathcal{F}^C$ which consists of locations of $u$ which are to be set to zero, using Equation ($\!\!$~\ref{eqn:bhat}).
\State Copy the information bits $\{d_j:1\le j \le K\}$, one by one into $\{u_i: i \in \mathcal{F}\}$.
\State Set $x = u\cdot F^{\otimes n}$ as in Equation ($\!\!~\ref{eqn:codeword}$).
\State Calculate the threshold $\delta$ below which the channel shall not be inverted, using Equations ($\!\!~\ref{eqn:avgP}$), ($\!\!~\ref{eqn:peakP}$) and ($\!\!~\ref{eqn:delta}$).
\end{algorithmic}
\end{algorithm}
For each time instant $k$, the state of the channel, $H_k$ is made available at the transmitter. This knowledge is used in performing a channel inversion, subject to the average power constraint $Q$ and peak power constraint $\tilde{Q}$. Algorithm ($\!\!$~\ref{algo:polaron}) is run during every time instant, $1 \le k \le N$ to obtain the constellation symbol to be transmitted, $X_k$. Bit $0$ is mapped to $+\sqrt{T}$ and bit $1$ is mapped to $-\sqrt{T}$ when the BPSK constellation used is $\{\pm \sqrt{T}\}$. When $T=0$, nothing is transmitted and the corresponding codeword symbol is skipped over. The decoder, in possession of CSIR, (can compute $T$ and hence) ignores the received symbol at time instances corresponding to $T=0$. 
\begin{algorithm}
\caption{Polar coded data transmission for Binary Input fading AWGN channels: Online part}
\label{algo:polaron}
\textbf{INPUT:} Design power $P$, Codeword $x$, Threshold $\delta$, Channel gain for the instant $H_k$.\\
\textbf{OUTPUT:} Transmit power $T$, Constellation symbol $\tilde{X}_k \in \{\pm \sqrt{T},0\}$.
\begin{algorithmic}[1]
\State Initialize $T=0$.
\State If $|H_k| \ge \delta$, set $T=\frac{P}{H_k^2}$.
\State Map $\tilde{X}_k = \frac{H_k}{|H_k|}\cdot(-1)^{x_k}\cdot \sqrt{T}$.
\end{algorithmic}
\end{algorithm}

%\subsection{An equivalent channel}
\section{Design with power constraint: An Equivalent View Point}
\label{eq_view}
We described our algorithm for using polar codes over a fading channel in the previous section. In this section, we subscribe to an abstract but nevertheless beneficial alternative view point of the same. 
\par As in the previous section, consider a fading AWGN channel with channel state information known at both the transmitter and the receiver. Given a polar codebook designed at a target rate $R$ with design-SNR $\frac{P}{\sigma^2}$, permissible average power $Q$ and peak power $\tilde{Q}$, we can obtain $\delta$ as in Equation ($\!\!$~\ref{eqn:delta}) so that the channel is inverted only in those time instances $k$ when $|H_k| \ge \delta$. This would result in an $\epsilon$ fraction of symbols getting lost where $\epsilon$ is calculated as
\begin{equation}
\hspace{2.5cm}
\epsilon = \int_{-\delta}^{\delta}dF_H.
\label{eqn:epsilon}
\end{equation}
The event $\{|H_k|<\delta\}$ corresponds to a \emph{deep fade} for small $\delta$ (the probability that the channel is in deep fade during a time slot is called \emph{outage} probability in wireless parlance). The dependence of $\epsilon$, the outage probability on the operating power $Q$ for the fading model in Section~\ref{eg_prob_dfn}, is illustrated in Figure \ref{fig_eps_vs_Q} for different values of $\sigma_H^2$.

\begin{figure}[h]
\begin{center}
\centering
\includegraphics[scale=0.4]{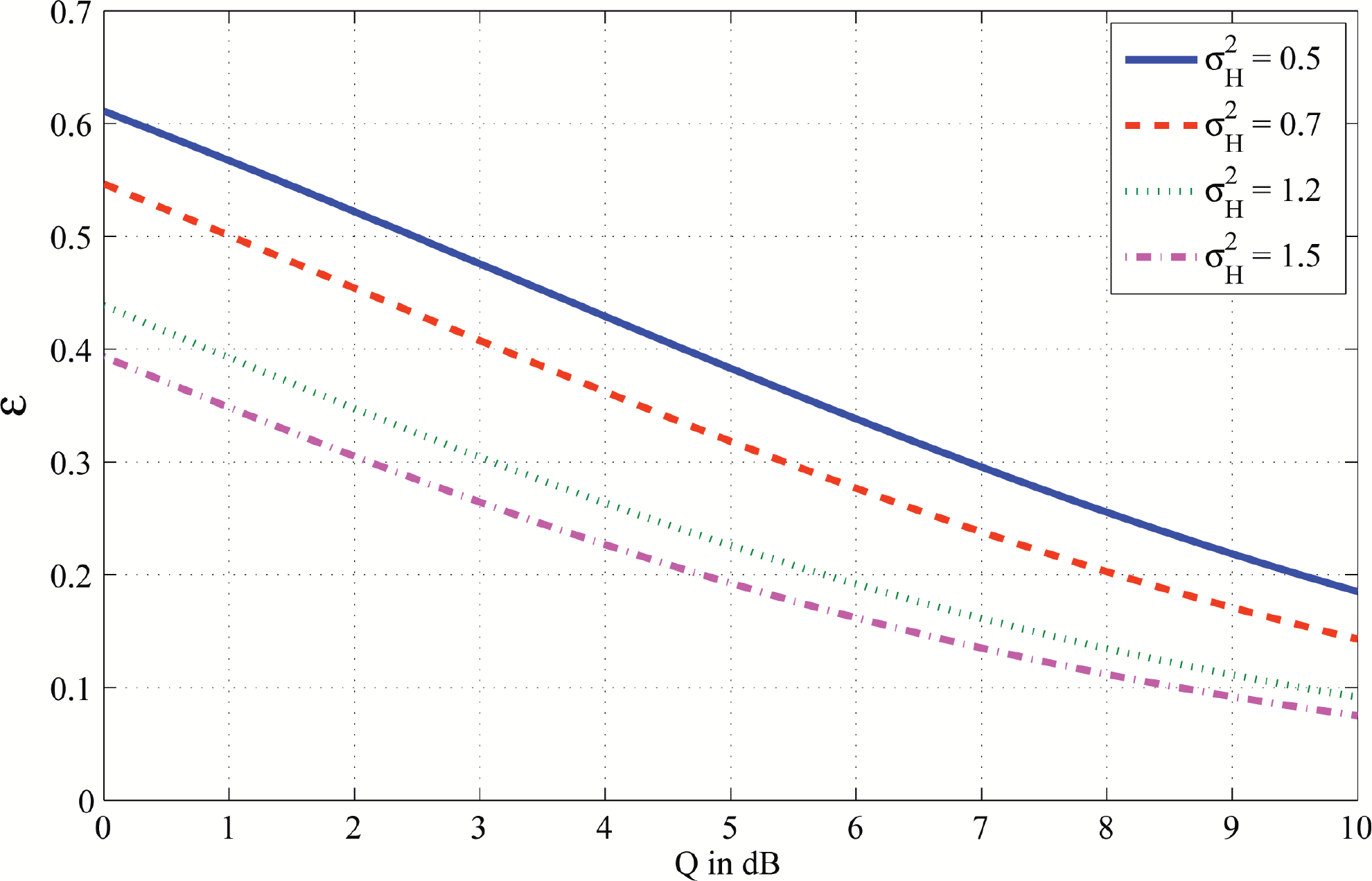}
\caption{Dependence of erasure probability on operating power is shown. The parameter values are $\sigma^2 = 1,~ R = 0.5$ and $P$ is accordingly calculated.} 
\label{fig_eps_vs_Q}
\end{center}
\end{figure}
\par In any time slot, there is a probability $(1-\epsilon)$ that, the transmitter \emph{inverts} the channel. Let $C_{AWGN}$ denote the capacity (or more precisely the binary input symmetric capacity, as we are considering only binary input channels) of the AWGN channel with input SNR $\frac{P}{\sigma^2}$ and $C_{outage}$ be the (binary input symmetric) outage capacity (\cite{tse2005fundamentals}) of the fading channel. If the channel is in outage, the transmitter does not transmit any symbol. Since CSIR is  available, the receiver can keep track of the time instances at which the transmissions happen and discard the channel outputs at the remaining time instants. Equivalently, time instants at which channel outputs are discarded can be considered as erasures introduced by the channel. 
\par Thus, we can think of a fading channel with CSIT and CSIR available, over which the transmitter sends data at a constant rate (via channel inversion) during \textit{most} of the time and remains \textit{silent} at others, as the cascade of an AWGN channel and an erasure channel (see Figure~\ref{eq_cha}). Polar codes can be designed so as to achieve the binary input symmetric capacity of this equivalent channel. Note that, \textit{erasures} in the equivalent channel occur independently of the output of the AWGN channel. Hence, the symmetric  capacity of the equivalent channel is given by 
$C_{eq} = (1-\epsilon)C_{AWGN}.$
This, by equivalence, is also the outage capacity $C_{outage}$.
\begin{figure}[h]
\begin{center}
\centering
\includegraphics[scale=01]{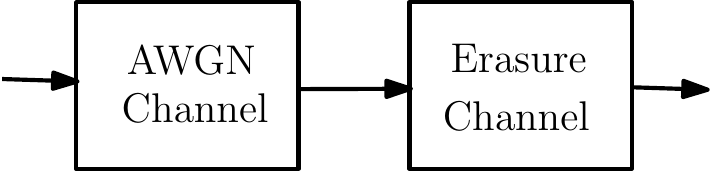}
\caption{An equivalent channel model} 
\label{eq_cha}
\end{center}
\end{figure}
\begin{thm}
The polar coded communication scheme constituted by Algorithm $1$ and Algorithm $2$ achieves the binary input symmetric outage capacity of a fading  AWGN channel for an arbitrarily large blocklength $N$.
\end{thm}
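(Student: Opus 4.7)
The plan is to reduce the theorem to Ar{\i}kan's polarization theorem applied to the equivalent channel $W_{eq}$ identified in Section~\ref{eq_view}. In one sentence: Algorithms~1 and~2 convert the fading AWGN channel with CSIT and CSIR into a memoryless binary-input symmetric channel, and polar codes tailored to that channel achieve its symmetric capacity, which by construction equals $C_{outage}$.

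To make the equivalence rigorous I would first track a single codeword symbol $x_k$ through Algorithm~2. When $|H_k|\ge\delta$ the transmitted symbol is $\tilde X_k=(H_k/|H_k|)(-1)^{x_k}\sqrt{P/H_k^2}$, so the received symbol satisfies $Y_k=H_k\tilde X_k+\eta_k=(-1)^{x_k}\sqrt{P}+\eta_k$, a BPSK--AWGN output with SNR $P/\sigma^2$. When $|H_k|<\delta$ nothing is transmitted, and since CSIR lets the receiver check the threshold event it marks the slot as an erasure. Because $\{H_k\}$ is i.i.d.\ and independent of the codeword, the per-symbol channel seen by $x_k$ is an i.i.d.\ B-DMC, namely the cascade $W_{eq}$ of a BPSK--AWGN channel and a BEC with erasure probability $\epsilon$ given by~\eqref{eqn:epsilon}. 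Symmetry of $W_{eq}$ follows from the symmetry of BPSK--AWGN plus the input-independence of the erasure, and its symmetric capacity factors as $I(W_{eq})=(1-\epsilon)\,I(W_{AWGN})=(1-\epsilon)\,C_{AWGN}$, which by the discussion preceding the theorem coincides with $C_{outage}$.

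I would then invoke the polarization theorem---extended to general B-DMCs in~\cite{telatar2009polarization} and to continuous-output settings in~\cite{abbe2012polar}---to conclude that for any rate $R<I(W_{eq})$ polar codes of blocklength $N$ attain bit-error probability at most $2^{-N^{\beta}}$ for any $\beta<1/2$ under successive-cancellation decoding on $W_{eq}$. Composing this guarantee with the equivalence above transports it back to the original fading channel under the scheme of Algorithms~1 and~2, completing the proof.

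The main technical subtlety I expect is the mismatch between the information set $\mathcal{F}$ produced by Algorithm~1 and the channel the codeword symbols actually see. Algorithm~1 seeds recursion~\eqref{eqn:bhat} with $Z_{0,1}=e^{-SNR}$, i.e.\ $Z(W_{AWGN})$, whereas $W_{eq}$ has $Z(W_{eq})=\epsilon+(1-\epsilon)\,e^{-SNR}$ (split the $Z$-integral into the AWGN part, scaled by $(1-\epsilon)$, and the erasure point-mass weighted by $\epsilon$). The cleanest repair, which I would adopt in the formal proof, is to reseed~\eqref{eqn:bhat} with this latter value; polarization of $W_{eq}$ is then tracked exactly and Ar{\i}kan's theorem applies verbatim to give the outage-capacity-achieving property. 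An alternative is to show directly that the $\mathcal{F}$ produced by Algorithm~1 is asymptotically good for $W_{eq}$ by exploiting the affine seed-level relationship between the two Bhattacharyya values together with the monotonicity of the polar-step operators $x\mapsto x^{2}$ and $x\mapsto 2x-x^{2}$; this route is more delicate because monotonicity alone does not guarantee that a quantile-based index set remains capacity-achieving after many recursions, which is why I would default to the reseeding argument.
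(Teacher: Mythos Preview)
Your reduction to the cascade channel $W_{eq}$ and the appeal to Ar{\i}kan's theorem is correct and in fact more careful than the paper's own argument. The paper gives a three-line sketch: it notes that $W_{eq}$ is a B-DMC, then asserts that ``a code of rate $R<C_{AWGN}$ designed for the AWGN channel operates at a rate $(1-\epsilon)R$ under our scheme, when used over the equivalent channel,'' and concludes by letting $R\to C_{AWGN}$ so that $(1-\epsilon)R\to C_{outage}$. The phrase ``operates at a rate $(1-\epsilon)R$'' is never unpacked; the most charitable reading is a buffering variant in which codeword symbols are held and dispatched only on non-outage slots, so that $N$ symbols consume about $N/(1-\epsilon)$ channel uses and the decoder sees pure AWGN---but Algorithm~2 as written ties $x_k$ rigidly to slot $k$, which is exactly your erasure reading.

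What you do differently is analyze the channel the code symbols \emph{actually} see, namely $W_{eq}$, and this forces you to confront the seed mismatch in Algorithm~1: Step~3 initializes recursion~(\ref{eqn:bhat}) with $Z(W_{AWGN})=e^{-\mathrm{SNR}}$ rather than $Z(W_{eq})=\epsilon+(1-\epsilon)e^{-\mathrm{SNR}}$. The paper does not engage with this point at all. Your reseeding fix is the clean repair and makes the argument rigorous for the scheme as literally stated; the paper's version, read against Algorithm~2, either silently assumes the buffering variant or leaves open the very gap you flagged. So your route is not merely an alternative---it closes a hole the paper's sketch leaves.
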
 
\begin{proof}
The equivalent channel (the cascade of binary input AWGN channel and erasure channel) is a B-DMC. Note that our scheme uses the symmetric capacity achieving polar code construction for the underlying (time invariant) AWGN channel \cite{arikan2009channel}, \cite{arikan2008performance}. A code of rate $R<C_{AWGN}$ designed for the AWGN channel, operates at a rate $(1-\epsilon)R$  under our scheme, when used over the equivalent channel. For  large $N$, a polar code of rate $R$ arbitrarily close to $C_{AWGN}$ can be constructed. The same code, using our scheme, operates at a rate arbitrarily close to $(1-\epsilon)C_{AWGN}=C_{eq}=C_{outage}$. 
\end{proof} 
There is a subtle, but important point to be noted while designing polar codes for the equivalent channel and using it over the fading channel. While designing the polar codes, the channel states are not known a priori. For the same reason, say, the code construction is done for the \textit{average} channel, i.e., for the mixture channel $(1-\epsilon)W(\cdot|\cdot)+ \epsilon W(e|\cdot)$, where $e$ denotes the erasure symbol. However in practice, the transmitter and the receiver can track the instantaneous fading states and hence the channel is \textit{not a mixture channel}. The channel capacity of the \textit{average} channel is greater than or equal to that of the equivalent channel. This is due to the concavity of the channel capacity, which is a mutual information functional, as a function of the channel input distribution. Hence, polar codes designed for the mixture channel will be operating above the outage capacity of the fading channel, and the bit error probability performance will be slightly inferior as illustrated in Figure \ref{fig_compare}.
\begin{figure}[h]
\begin{center}
\centering
\includegraphics[scale=0.6]{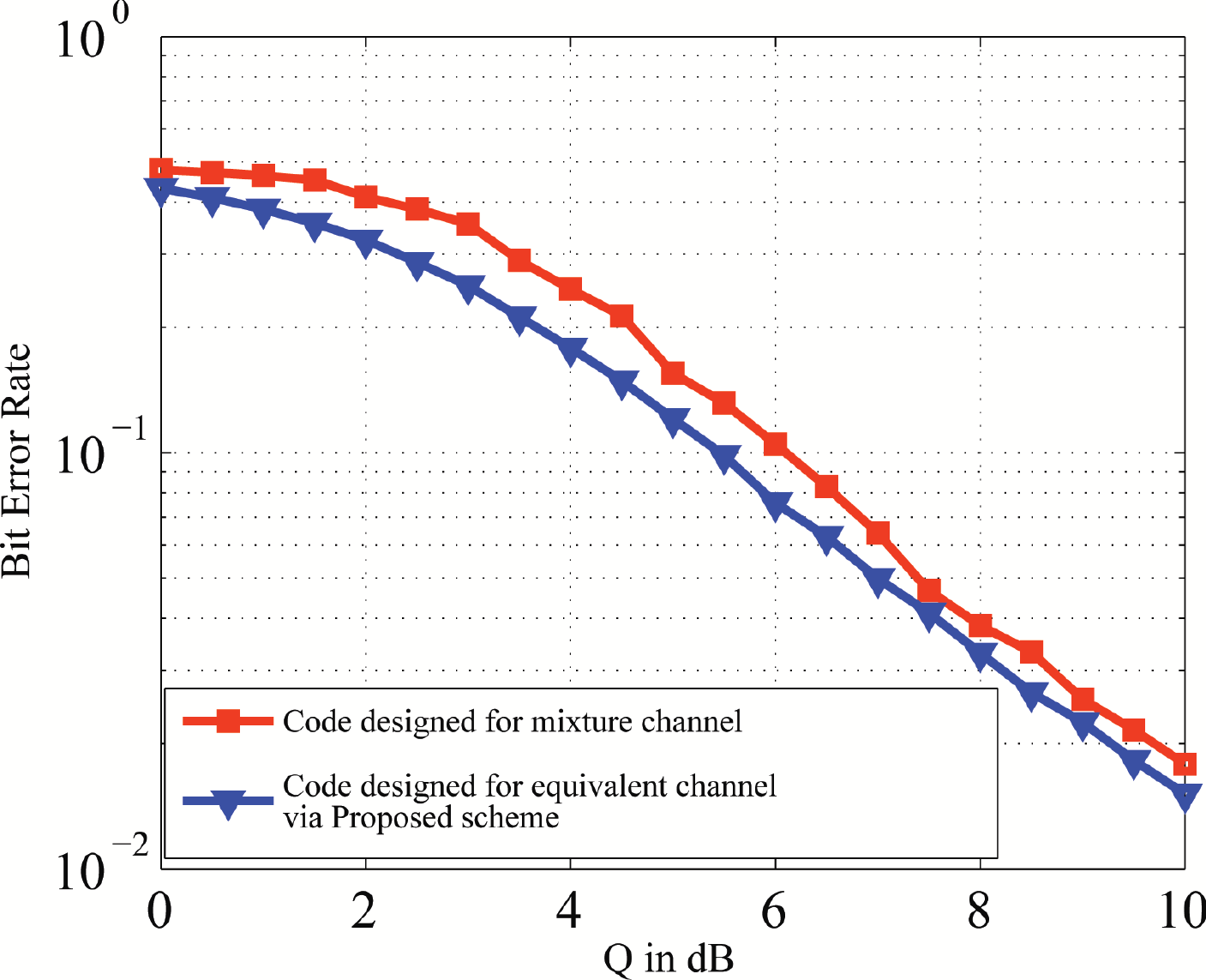}
\caption{Bit error rates for polar code construction using our scheme and polar code designed for the mixture channel. The parameters are $\sigma^2=1, F_H \equiv \mathcal{N}(0,1), R=0.5$ and $P$ calculated accordingly, and $N=1024$.} 
\label{fig_compare}
\end{center}
\end{figure}
\section{Rate optimal design for a given operating power}
\label{optm}
\par It is clear that for a fixed rate $R$ (and hence, a fixed design power $P$), since erasure probability $\epsilon$ monotonically decreases  with the average operating power $Q$, \emph{the best possible equivalent channel} corresponds to $Q$  being arbitrarily large. Alternatively, for a fixed $Q$, is there a \emph{rate optimal} design SNR $\frac{P}{\sigma^2}$ at which to construct the polar code? The optimal design power is obtained as a solution to the following optimization problem:$$\max_P ~(1-\epsilon) \Big[-\int_{-\infty}^{+\infty}f_Y(y)\log\big(f_Y(y)\big)dy \Big], $$ subject to Equations (\ref{eqn:avgP}) and (\ref{eqn:epsilon}), where $$f_Y(y)=\frac{1}{2}\frac{1}{\sqrt{2\pi\sigma^2}}e^{-\frac{(y-\sqrt{P})^2}{2\sigma^2}}+\frac{1}{2}\frac{1}{\sqrt{2\pi\sigma^2}}e^{-\frac{(y+\sqrt{P})^2}{2\sigma^2}}.$$ Note that the additive noise variance $\sigma^2$ remains the same throughout. The resulting optimal design power $P^*$ can be used to compute the optimal rate $R^*$ as per Equation (\ref{eqn:bin}). The optimal $R^*$ is plotted against the operating power $Q$ in Figure~\ref{fig_optR} (for $\sigma^2=1$). Thus, when provided with an operating power constraint $Q$, the best polar code (in terms of rate) can be chosen accordingly and then, operated over the fading channel using our algorithm (explained in Section~\ref{algo_main_result}).
\begin{figure}[ht]
\begin{center}
\centering
\includegraphics[scale=0.55]{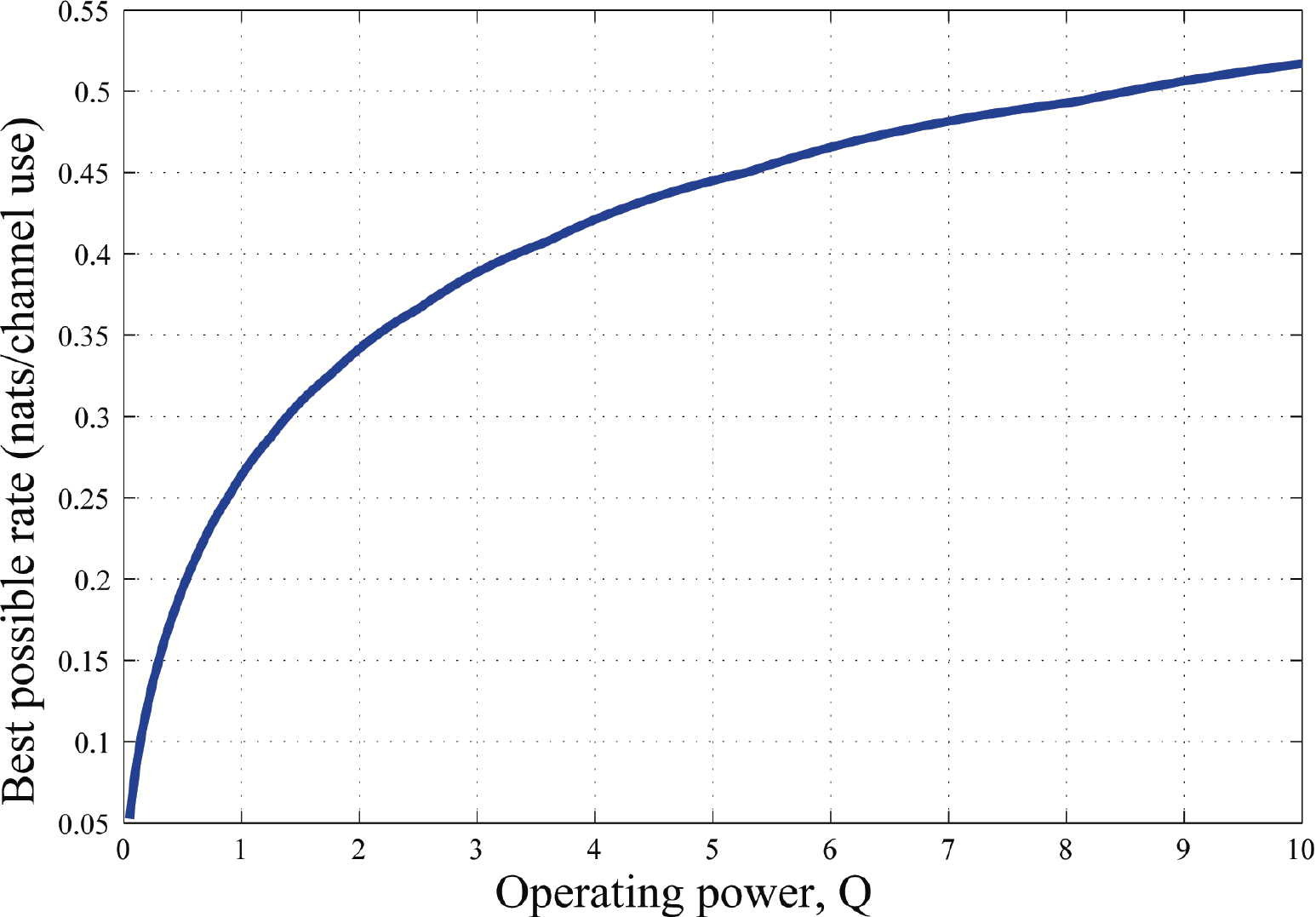}
\caption{Optimal design rate $R^*$ for a given operating power.} 
\label{fig_optR}
\end{center}
\end{figure}
 
\section{Discussion and Conclusion}
\label{conclusion}
Channel inversion as a strategy to combat fading, although power inefficient, is suitable for delay critical applications. Note that the assumption of channel gains being i.i.d. is not crucial. The inversion scheme is oblivious to the correlation between the fading states and hence works equally well irrespective of whether the fading considered is \emph{slow} or \emph{fast}. In addition, the presence of fading does not increase  the complexity of polar code construction in comparison with that of an AWGN channel under our scheme. 
\par It is clear that the proposed scheme requires CSIT to function. It is the receiver that makes CSI available to the transmitter via feedback. However, due to bandwidth constraints, a somewhat more realistic assumption is that only a \emph{quantized version} of the channel state $H_k$, say $\widehat{H}_k$, is fed back at time instant $k$. The power expended at time instant $k$ under the channel inversion scheme would then be ${P}{\widehat{H}_k^{-2}}$. Hence, the quantization scheme should be designed  such that $\mathbb{E}_{\widehat{H}}[P\widehat{H}^{-2}]\leq Q$. Further, the effect of channel state quantization on channel polarization remains to be explored.   

\section*{Acknowledgment}
The authors would like to thank Shahid Mehraj Shah, Dept. of ECE, IISc, for useful discussions.

\bibliographystyle{IEEEtran}
\bibliography{bibfile_fading_polar}

\end{document}